\documentclass[journal]{IEEEtran}

\usepackage{amsmath,amssymb,amsfonts}
\usepackage{algorithm}
\usepackage{algorithmic}
\usepackage{graphicx}
\usepackage{textcomp}
\usepackage{stfloats}
\usepackage{url}
\usepackage{balance}

\newtheorem{theorem}{Theorem}

\begin{document}

\title{Adaptive Peer Clustering with Hierarchical Random Linear Network Coding for Resilient Decentralized Wireless Networks}

\author{Navaneetha~Krishnan~K and Harinisri~V
\thanks{N. Krishnan K is with the Department of Electronics and Communication Engineering, SIMATS Engineering, Saveetha Institute of Medical and Technical Sciences, Chennai, India (e-mail: knavaneeth385@gmail.com).}
\thanks{Harinisri V. is with the Department of Physics and Nanotechnology, SRM Institute of Science and Technology, Chennai, India (e-mail: harinisriv2003@gmail.com).}
\thanks{Manuscript submitted \today.}
}

\markboth{IEEE Transactions on Wireless Communications}%
{Krishnan K \MakeLowercase{\textit{et al.}}: Adaptive Peer Clustering with Hierarchical RLNC}

\maketitle

\begin{abstract}
Decentralized wireless collectives including vehicular swarms, IoT clusters, and edge AI networks require communication protocols that maintain robustness under dynamic topologies and heterogeneous link quality. While Random Linear Network Coding (RLNC) provides algebraic resilience against packet erasures, its performance degrades significantly when peers exhibit diverse channel conditions. This paper presents Adaptive Peer Clustering with Hierarchical RLNC (APC-RLNC), a system that dynamically groups peers by exponentially weighted moving average (EWMA) reliability metrics and applies multi-tier network coding within and across clusters. We formalize the clustering optimization problem, derive closed-form decoding probability bounds for Markov erasure channels, and prove $O(\sqrt{T})$ regret for online reconfiguration under the Follow-the-Regularized-Leader (FTRL) framework. Our implementation includes both a high-fidelity network simulator and a proof-of-concept testbed deployment on Jetson Nano edge devices. Evaluation across diverse scenarios including high-mobility vehicular networks, burst-error channels, and adversarial interference demonstrates 5.2--9.8 percentage-point packet delivery ratio (PDR) improvements, 10--23\% latency reductions, and up to 30\% higher node retention compared to state-of-the-art baselines. The system exhibits linear scalability to 500+ nodes and maintains real-time reconfiguration overhead below 3\%. APC-RLNC establishes adaptive clustering as a foundational primitive for AI-native 6G wireless systems.
\end{abstract}

\begin{IEEEkeywords}
Network coding, wireless networks, adaptive clustering, edge computing, 6G, O-RAN.
\end{IEEEkeywords}

\section{Introduction}
\IEEEPARstart{D}{ecentralized} wireless collectives -- vehicular swarms, IoT clusters, and edge AI networks -- require communication protocols that maintain robustness under dynamic topologies and heterogeneous link quality. Robustness is challenged whenever high-mobility movement patterns, energy-harvesting intermittency causes transient node failures, or adversarial interference disrupts communication \cite{shi2020}.

Fully decentralized communication protocols face three critical challenges: (1) \emph{heterogeneous link quality} -- channel erasure probabilities vary by orders of magnitude due to mobility, fading, multipath, and interference; (2) \emph{dynamic topology (churn)} -- peers join and leave frequently, with typical churn rates of 2--10\% per transmission round, invalidating static redundancy strategies; and (3) \emph{energy constraints and scale} -- edge devices are power-limited (often sub-5\,W), yet must coordinate across hundreds to thousands of peers.

Random Linear Network Coding (RLNC) elegantly addresses packet loss by transmitting random linear combinations of source packets over finite fields \cite{ho2006,fragouli2007}. A receiver can decode the original generation once it collects enough linearly independent coded packets. However, classical RLNC assumes homogeneous channel conditions: all receivers experience identical erasure rates. In heterogeneous settings this assumption breaks down -- reliable peers waste bandwidth on excessive redundancy while unreliable peers starve, leading to throughput collapse and fairness violations \cite{pandi2017,dilanchian2024}.

This work was in part motivated by recent demonstrations that programmable, near-real-time RAN control loops are practical on real hardware rather than purely theoretical constructs \cite{geetha2025}, which encouraged us to design APC-RLNC as a coding-layer primitive that could plug directly into such control loops.

\subsection{Key Insight and Contribution}
Our central insight is that clustering peers by link reliability enables localized redundancy optimization without global coordination overhead. Peers with similar erasure statistics form reliability clusters and apply intra-cluster RLNC with tailored redundancy. Hierarchical inter-cluster coding provides secondary protection against cluster-level failures. This decomposition yields three key advantages:
\begin{itemize}
\item \textbf{Reduced redundancy waste:} High-reliability clusters require minimal coding overhead, while low-reliability clusters receive proportional protection.
\item \textbf{Scalability:} Clustering complexity is $O(N\log N)$ with distributed algorithms; intra-cluster RLNC operations parallelize trivially.
\item \textbf{Fault isolation:} Unreliable or adversarial peers affect only their local cluster, preserving global decoding capability.
\end{itemize}

This paper makes the following contributions:
\begin{enumerate}
\item \textbf{System design and implementation:} We present APC-RLNC, a complete framework integrating EWMA reliability tracking, online clustering optimization, and hierarchical RLNC encoding and decoding. The system is implemented both in a high-fidelity Python/NumPy simulator and as a proof-of-concept on embedded Jetson Nano devices.
\item \textbf{Theoretical grounding:} We derive closed-form decoding probability bounds under Markov erasure channels (Theorem~\ref{thm:decode}) and prove $O(\sqrt{T})$ regret for the online clustering reconfiguration policy under the FTRL framework (Theorem~\ref{thm:regret}).
\item \textbf{Empirical evaluation:} We evaluate APC-RLNC against uniform-redundancy and centralized-feedback baselines across nominal, high-mobility, burst-error, and adversarial scenarios, and validate the simulator against a 10-node Jetson Nano hardware testbed.
\end{enumerate}

The remainder of this paper is organized as follows. Section~\ref{sec:related} surveys related work. Section~\ref{sec:model} formalizes the system model. Section~\ref{sec:analysis} presents theoretical analysis. Section~\ref{sec:design} details the APC-RLNC design. Section~\ref{sec:implementation} describes the implementation. Section~\ref{sec:evaluation} reports simulation and testbed results. Section~\ref{sec:discussion} discusses implications and limitations. Section~\ref{sec:related-systems} surveys related deployments, and Section~\ref{sec:conclusion} concludes.

\section{Related Work}
\label{sec:related}

\subsection{Network Coding in Wireless Systems}
Classical RLNC theory \cite{ho2006,fragouli2007} established near-optimal throughput for multicast networks under uniform erasure. Extensions to sliding-window coding \cite{landon2025} and systematic codes \cite{shrader2009} improved latency and decoding complexity but retained homogeneity assumptions. PACE \cite{pandi2017} introduced adaptive redundancy control via centralized feedback, yet presumed static peer groupings. ARLNC \cite{dilanchian2024} proposed adjustable field sizes but did not address heterogeneous channels. Recent work on BATS codes \cite{fan2023} reduced encoding overhead via sparse matrices but required centralized rate allocation. Our approach complements these by enabling decentralized, link-quality-driven adaptation.

\subsection{Peer Clustering in Distributed Systems}
Early P2P clustering schemes -- CDC \cite{ramaswamy2005} and CAGA \cite{chinis2011} -- organized peers by proximity or content affinity, not channel quality. In federated learning, clustering mitigates non-IID data distributions \cite{soliman2020}, but optimizes model convergence rather than transmission efficiency. To our knowledge, APC-RLNC is the first framework explicitly clustering by erasure statistics for redundancy optimization.

\subsection{Adaptive Wireless Protocols}
Opportunistic routing \cite{biswas2005} and rate adaptation \cite{holland2001} dynamically adjust per-link parameters but operate at fixed topology granularity. Multi-tier coding schemes \cite{nguyen2009} partition networks hierarchically yet use static assignments. Our online clustering algorithm continuously adapts to time-varying channels with provable regret bounds.

\subsection{6G and Edge AI Systems}
AI-native wireless systems \cite{saad2020,letaief2019} emphasize decentralized intelligence, ultra-low latency, and resilience. O-RAN architectures \cite{oran2024} introduce programmable RAN Intelligent Controllers (RICs) where APC-RLNC can deploy as an xApp. Recent work on neural-network-assisted protocol design \cite{letaief2019} complements our approach: APC-RLNC provides a robust baseline coding layer, while ML optimizes higher-level policy.

\section{System Model and Problem Formulation}
\label{sec:model}

\subsection{Network Topology and Dynamics}
Consider a decentralized wireless network of nodes $V_t = \{1,\dots,N_t\}$ at discrete time $t = 0,1,2,\dots,T$. The topology forms an undirected graph $G_t = (V_t, E_t)$, where edge $(i,j) \in E_t$ exists if nodes $i$ and $j$ can communicate. Each link experiences time-varying erasure probability $p_{ij}(t) \in [0,1]$. Nodes depart with rate $\lambda_{\text{leave}}$ and new nodes join to maintain expected size $N$. Churn follows a Poisson process with per-step rate $\rho_{\text{churn}} \in [0.02, 0.10]$.

For vehicular scenarios, nodes move according to a random waypoint model with velocity $v_i \sim \text{Uniform}(5, 30)\,\text{m/s}$. Path loss follows
\begin{equation}
p_{ij}(t) = \min\!\left(1,\; p_0 \left(\frac{d_{ij}(t)}{d_0}\right)^{\alpha}\right),
\label{eq:pathloss}
\end{equation}
where $d_{ij}(t)$ is distance, $\alpha = 3.5$ is the path-loss exponent, and $p_0 = 0.05$, $d_0 = 50$\,m are calibration constants.

\subsection{RLNC Fundamentals}
A generation consists of $K$ original packets $\{p_1,\dots,p_K\}$ over finite field $\mathbb{F}_{256}$. An encoder transmits $K+R$ coded packets:
\begin{equation}
c_j = \sum_{k=1}^{K} \alpha_{j,k} p_k, \qquad \alpha_{j,k} \sim \text{Uniform}(\mathbb{F}_{256}).
\label{eq:rlnc-encode}
\end{equation}
Decoding succeeds if the receiver collects at least $K$ linearly independent coded packets. For a node with per-packet erasure probability $p_i$, the success probability is
\begin{equation}
P_{\text{decode}}(p_i; K, R) = \sum_{k=K}^{K+R} \binom{K+R}{k} (1-p_i)^k \, p_i^{\,K+R-k}.
\label{eq:decode-prob}
\end{equation}

\subsection{Heterogeneous Channel Models}
We consider three channel models:
\begin{itemize}
\item \textbf{Bernoulli:} $p_i \sim \text{Uniform}(0.05, 0.25)$, i.i.d.\ per node.
\item \textbf{2-state Markov:} each link alternates between good ($p_g = 0.05$) and bad ($p_b = 0.40$) states with transition probabilities $P_{gb} = P_{bg} = 0.10$.
\item \textbf{Velocity-dependent fading:} for vehicular networks, $p_i(t) = p_{\text{base}} + \beta \cdot |v_i(t)|$, where $\beta = 0.003$\,s/m models Doppler effects.
\end{itemize}

\subsection{Reliability Scoring}
Each node $i$ maintains an EWMA reliability score:
\begin{equation}
s_i(t) = \alpha\, \ell_i(t) + (1-\alpha)\, s_i(t-1),
\label{eq:ewma}
\end{equation}
where $\ell_i(t) \in [0,1]$ is the instantaneous packet success rate and $\alpha = 0.2$ is the smoothing factor. The score $s_i(t)$ estimates $1 - p_i(t)$.

\subsection{Clustering Objective}
At reconfiguration epochs $t \in \{0, \tau, 2\tau, \dots\}$, peers partition into $C(t)$ clusters $\{\mathcal{C}_{1,t},\dots,\mathcal{C}_{C(t),t}\}$ by minimizing
\begin{equation}
L(\mathcal{C}_{1:C}, t) = \sum_{c=1}^{C} \Big[\, \mathrm{Var}_{i \in \mathcal{C}_c}[s_i(t)] \;-\; \lambda\,|\mathcal{C}_c| \,\Big],
\label{eq:cluster-loss}
\end{equation}
where $\mathrm{Var}(\cdot)$ penalizes intra-cluster heterogeneity and $\lambda > 0$ rewards larger clusters. Balancing these two terms prevents over-fragmentation while maintaining homogeneity.

\subsection{Hierarchical RLNC}
Within cluster $c$, nodes apply local RLNC with redundancy
\begin{equation}
R_c = \left\lceil \beta\, |\mathcal{C}_c| \, \bar{p}_c^{\,\text{erase}} \right\rceil,
\label{eq:redundancy}
\end{equation}
where $\bar{p}_c^{\,\text{erase}} = \frac{1}{|\mathcal{C}_c|}\sum_{i \in \mathcal{C}_c} p_i$ and $\beta \geq 1$ is a safety margin. Clusters exchange \emph{bridge packets} -- RLNC-encoded summaries of their local generations -- to provide secondary protection. A global decoder can recover the generation if at least $K$ total degrees of freedom arrive across all clusters.

\section{Theoretical Analysis}
\label{sec:analysis}

\subsection{Decoding Success Probability}

\begin{theorem}[Hierarchical Decoding Probability]
\label{thm:decode}
For independent clusters with per-cluster success probabilities $P_{\text{succ}}^{(c)}$, the system-level decoding probability is
\begin{equation}
P_{\text{sys}} = 1 - \prod_{c=1}^{C} \left(1 - P_{\text{succ}}^{(c)}\right).
\label{eq:theorem1}
\end{equation}
\end{theorem}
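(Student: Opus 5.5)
The plan is to read Theorem~\ref{thm:decode} as a statement about the union of independent success events, and to derive it by complementation. Let $S_c$ denote the event that cluster $c$ succeeds in delivering its generation, so that $\Pr[S_c]=P_{\text{succ}}^{(c)}$. One natural choice of $P_{\text{succ}}^{(c)}$ is $P_{\text{decode}}(\bar p_c^{\,\text{erase}};K,R_c)$ from \eqref{eq:decode-prob} with $R_c$ from \eqref{eq:redundancy}, but the argument does not depend on this choice. The first step is to fix the system-level success criterion under which the formula holds. System-level decoding is declared successful when at least one cluster fully recovers the generation. Via the bridge packets of the hierarchical scheme, that cluster's decoded generation, or equivalently $K$ independent combinations of it, is then forwarded to the global decoder. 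The global decoder therefore succeeds on the event $\bigcup_{c=1}^{C} S_c$. I will state explicitly that bridge-packet delivery is treated as lossless, or that its loss has been absorbed into $P_{\text{succ}}^{(c)}$.

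The second step is the computation itself. By De Morgan's law, the failure event is $\bigcap_{c} S_c^{\complement}$. The hypothesis of independent clusters means that the erasure processes driving distinct clusters are independent, so $\{S_c\}$ are mutually independent events, and hence so are their complements. This gives
\begin{equation*}
\Pr\Big[\bigcap_{c=1}^{C} S_c^{\complement}\Big]=\prod_{c=1}^{C}\big(1-P_{\text{succ}}^{(c)}\big),
\end{equation*}
and subtracting from one yields \eqref{eq:theorem1}. To keep this compatible with the Markov erasure channels, I will condition on the channel state sequence of each cluster: $P_{\text{succ}}^{(c)}$ is the corresponding average, and independence across clusters survives averaging because the state chains of different clusters are independent.

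I expect the main obstacle to be reconciling the formula with the sentence in the system model saying that the global decoder succeeds once $K$ total degrees of freedom arrive across all clusters. Pooling partial ranks can succeed even when no single cluster succeeds. Under that criterion, \eqref{eq:theorem1} is therefore a \emph{lower bound} on $P_{\text{sys}}$, not an identity. My plan is to prove equality under the ``any cluster decodes'' criterion and then add a remark showing the inequality under pooling, since $\bigcup_c S_c$ is contained in the pooled-rank success event. For that containment I use the fact that over $\mathbb{F}_{256}$, the coefficient vectors of bridge packets from a decoded cluster span the full space with probability at least $\prod_{j=1}^{K}(1-256^{-j})$. I would either fold this factor into $P_{\text{succ}}^{(c)}$ or assume bridge packets are generated from decoded, systematic data so that the factor is exactly one.
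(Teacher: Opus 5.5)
Your proposal is correct and follows the paper's argument: the paper's one-line proof asserts that global decoding fails if and only if every cluster fails, then multiplies the per-cluster failure probabilities using independence, which is exactly your complement-of-a-union computation. Your added caveat is also sound: under the system model's pooled ``$K$ total degrees of freedom'' criterion, individual failure of every cluster no longer forces global failure, so the paper's ``if and only if'' holds in only one direction and Eq.~\eqref{eq:theorem1} becomes a lower bound on $P_{\text{sys}}$ rather than an identity.
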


\begin{IEEEproof}
Decoding fails globally if and only if all clusters fail. By independence,
$P_{\text{fail,sys}} = \prod_c P_{\text{fail}}^{(c)} = \prod_c \left(1 - P_{\text{succ}}^{(c)}\right)$.
\end{IEEEproof}

\emph{Worked example.} For the 2-state Markov model with steady-state $\pi_g = \pi_b = 0.5$, the average erasure rate is $\bar{p}_c = 0.5(0.05) + 0.5(0.40) = 0.225$. With $K = 32$ and $R_c = 16$ (so $K+R_c = 48$), Eq.~\eqref{eq:decode-prob} gives
\begin{equation}
P_{\text{succ}}^{(c)} = \sum_{k=32}^{48} \binom{48}{k} (0.775)^k (0.225)^{48-k} \approx 0.971.
\label{eq:theorem1-worked}
\end{equation}
For $C=5$ independent clusters, Eq.~\eqref{eq:theorem1} gives
\begin{equation}
P_{\text{sys}} \approx 1 - (1-0.971)^5 = 1 - (0.029)^5 \approx 0.99999998,
\end{equation}
i.e., system-level reliability remains extremely close to certain even though any single cluster only reaches $\approx$97.1\% on its own -- illustrating why hierarchical bridge coding across independent clusters is valuable even at moderate per-cluster redundancy.\footnote{An earlier draft of this manuscript reported $P_{\text{succ}}^{(c)} \approx 0.989$ for these parameters; that value does not follow from Eq.~\eqref{eq:decode-prob} at $R_c=16$ (it corresponds instead to $R_c \approx 18$). The number above, $\approx 0.971$, is the value Eq.~\eqref{eq:decode-prob} actually yields at $K=32,\,R_c=16,\,\bar p_c = 0.225$, and has been verified against 4{,}000-trial Monte Carlo simulation of the encoder/decoder to within 0.2\%.}

\subsection{Online Optimization Regret}
Let $\mathbf{x}_t \in \Delta^C$ be the clustering assignment at epoch $t$, where $\Delta^C$ is the $C$-simplex. The loss $\ell_t(\mathbf{x}_t)$ measures clustering quality from Eq.~\eqref{eq:cluster-loss}. Define cumulative regret:
\begin{equation}
\text{Regret}(T) = \sum_{t=1}^{T} \ell_t(\mathbf{x}_t) \;-\; \min_{\mathbf{x}^* \in \Delta^C} \sum_{t=1}^{T} \ell_t(\mathbf{x}^*).
\label{eq:regret-def}
\end{equation}

\begin{theorem}[Online Clustering Regret]
\label{thm:regret}
Under convex loss relaxation and Lipschitz constant $G$, the FTRL algorithm with squared-Euclidean regularization $\psi(\mathbf{x}) = \tfrac{1}{2}\|\mathbf{x}\|_2^2$ achieves
\begin{equation}
\text{Regret}(T) \leq G^2 \sqrt{2T}.
\label{eq:theorem2}
\end{equation}
\end{theorem}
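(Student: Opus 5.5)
The plan is to run the standard FTRL analysis on the linearized losses, with the simplex $\Delta^C$ as the feasible set and $\psi(\mathbf{x}) = \tfrac12\|\mathbf{x}\|_2^2$ as the regularizer. By the convex-relaxation assumption, each $\ell_t$ is convex on $\Delta^C$. Pick subgradients $\mathbf{g}_t \in \partial \ell_t(\mathbf{x}_t)$ with $\|\mathbf{g}_t\|_2 \le G$; the Lipschitz hypothesis gives exactly this bound. Convexity gives $\ell_t(\mathbf{x}_t) - \ell_t(\mathbf{x}^*) \le \langle \mathbf{g}_t, \mathbf{x}_t - \mathbf{x}^*\rangle$, so it suffices to bound the regret of the linear losses $\mathbf{x} \mapsto \langle \mathbf{g}_t, \mathbf{x}\rangle$. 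The algorithm is the usual one:
\begin{equation*}
\mathbf{x}_{t+1} = \arg\min_{\mathbf{x} \in \Delta^C} \Big\{ \eta \sum_{s=1}^{t} \langle \mathbf{g}_s, \mathbf{x}\rangle + \psi(\mathbf{x}) \Big\}.
\end{equation*}

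First, I will use the follow-the-leader/be-the-leader lemma to split the linearized regret into a regularizer-range term and a stability term:
\begin{equation*}
\sum_{t=1}^{T} \langle \mathbf{g}_t, \mathbf{x}_t - \mathbf{x}^*\rangle \le \frac{\psi(\mathbf{x}^*) - \min_{\Delta^C}\psi}{\eta} + \sum_{t=1}^{T} \langle \mathbf{g}_t, \mathbf{x}_t - \mathbf{x}_{t+1}\rangle .
\end{equation*}
On the simplex, $\psi$ lies between $\tfrac{1}{2C}$ and $\tfrac12$, so the first term is at most $\tfrac{1}{2\eta}$.

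Second, I will bound the stability term. The FTRL objective is $1$-strongly convex in $\|\cdot\|_2$, because $\eta$ multiplies only the linear part and $\psi$ itself has modulus $1$. Comparing the first-order optimality conditions at consecutive iterates then gives $\|\mathbf{x}_t - \mathbf{x}_{t+1}\|_2 \le \eta\|\mathbf{g}_t\|_2$. Cauchy--Schwarz then bounds each summand by $\eta G^2$. Combining the two terms,
\begin{equation*}
\text{Regret}(T) \le \frac{1}{2\eta} + \eta T G^2 .
\end{equation*}

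Third, I will tune the step size. Taking $\eta = 1/(G\sqrt{2T})$ balances the two terms and yields $\text{Regret}(T) \le G\sqrt{2T}$.

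The main obstacle is reconciling this with the stated form $G^2\sqrt{2T}$. The bound $G\sqrt{2T} \le G^2\sqrt{2T}$ holds only when $G \ge 1$. The same is true of the $G$-free choice $\eta = 1/\sqrt{2T}$, which gives $\tfrac{\sqrt{2T}}{2}(1+G^2)$. I would handle this by adopting the normalization $G \ge 1$, which is without loss of generality since one may always enlarge a Lipschitz constant. Under that normalization either choice of $\eta$ gives Eq.~\eqref{eq:theorem2}, and I would state this normalization explicitly.

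A secondary point is that the clustering loss Eq.~\eqref{eq:cluster-loss} is combinatorial. Its "convex loss relaxation" on $\Delta^C$ must be taken as given by hypothesis rather than derived, and the regret is then measured against the relaxed comparator in Eq.~\eqref{eq:regret-def}.
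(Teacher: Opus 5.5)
Your analysis up to the tuning step is correct. It is the standard FTRL argument that the paper's one-line proof only cites; the paper uses the anytime schedule $\eta_t = 1/\sqrt{t}$ rather than your fixed tuned $\eta$, which is an inessential difference. Linearization, be-the-leader, and the stability estimate $\|\mathbf{x}_t-\mathbf{x}_{t+1}\|_2\le\eta\|\mathbf{g}_t\|_2$ together give $\text{Regret}(T)\le G\sqrt{2T}$.

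The gap is your final step: assuming $G\ge 1$ is \emph{not} without loss of generality. If the losses are $G$-Lipschitz with $G<1$, they are also $1$-Lipschitz, but your bound with the enlarged constant gives only $\text{Regret}(T)\le\sqrt{2T}$. Enlarging $G$ enlarges the right-hand side of Eq.~\eqref{eq:theorem2}, so the inequality for the original $G$ is never recovered. The normalization is a genuine extra hypothesis, and it excludes the regime the paper works in: its worked example takes $G=0.05$.

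No other argument can close this gap, because for small $G$ the statement is false. Minimax regret scales linearly with the size of the losses, whereas $G^2\sqrt{2T}$ scales quadratically. Take $C=2$ and linear (hence convex) losses with $\mathbf{g}_t=\epsilon_t\tfrac{G}{\sqrt{2}}(1,-1)$, so $\|\mathbf{g}_t\|_2=G$, where the $\epsilon_t$ are i.i.d.\ Rademacher signs. Since $\mathbf{x}_t$ depends only on $\mathbf{g}_1,\dots,\mathbf{g}_{t-1}$, the learner's expected cumulative loss is $0$. The best point of $\Delta^2$ has cumulative loss $-\tfrac{G}{\sqrt{2}}\bigl|\sum_t\epsilon_t\bigr|$. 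By Khintchine's inequality,
\[
\mathbb{E}\bigl[\text{Regret}(T)\bigr]=\frac{G}{\sqrt{2}}\,\mathbb{E}\Bigl|\sum_{t=1}^{T}\epsilon_t\Bigr|\ \ge\ \frac{G}{\sqrt{2}}\sqrt{\frac{T}{2}}=\frac{G\sqrt{T}}{2},
\]
so some loss sequence forces FTRL, with any step sizes, to incur regret at least $G\sqrt{T}/2$. This exceeds $G^2\sqrt{2T}$ whenever $G<1/(2\sqrt{2})$. For $G=0.05$ and $T=500$ it is about $0.56$, against the claimed $0.079$.

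The paper's own proof does not escape this either. With $\eta_t=1/\sqrt{t}$ the standard analysis gives a bound of the form $(a+bG^2)\sqrt{T}$, with a $G$-independent regularizer term $a>0$ that cannot be absorbed into $G^2\sqrt{2T}$. What you can honestly write is $\text{Regret}(T)\le G\sqrt{2T}$. Add that Eq.~\eqref{eq:theorem2} follows only under the additional hypothesis $G\ge 1$ and fails in general.
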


\begin{IEEEproof}
Standard FTRL analysis \cite{hazan2016} with $\eta_t = 1/\sqrt{t}$ yields regret $O(\sqrt{T})$; the explicit constant follows from the Lipschitz bound $G$.
\end{IEEEproof}

\emph{Worked example.} With $G = 0.05$ and $T = 500$ steps, Eq.~\eqref{eq:theorem2} gives
\begin{equation}
\text{Regret}(500) \leq (0.05)^2 \sqrt{2 \times 500} \approx 0.079,
\end{equation}
negligible relative to the cumulative loss accrued over the same horizon, and consistent with the empirical regret trace produced by our FTRL implementation (Section~\ref{sec:implementation}).\footnote{An earlier draft approximated this quantity as $G\sqrt{T/\tau} \approx 0.05\sqrt{500/20} \approx 0.25$ using the reconfiguration period $\tau$, which is not the quantity Theorem~\ref{thm:regret} bounds -- $\tau$ does not appear in Eq.~\eqref{eq:theorem2}. The corrected value above follows directly from the stated bound.}

\subsection{Complexity Analysis}
\begin{itemize}
\item \textbf{Clustering:} Mini-batch $k$-means with $k = \lfloor \sqrt{N}/2 \rfloor$ has complexity $O(Nk)$ per epoch.
\item \textbf{Encoding:} Per-cluster RLNC encoding is $O(K^2)$; total $O(CK^2)$.
\item \textbf{Decoding:} Gaussian elimination is $O(K^3)$ per cluster; hierarchical recovery adds $O(CK)$ overhead.
\end{itemize}
For $N=100$, $K=32$, $C\approx10$: per-epoch cost is approximately 5{,}000 operations, well below real-time constraints on modern edge devices.

\section{APC-RLNC System Design}
\label{sec:design}

Fig.~\ref{fig:architecture} illustrates the APC-RLNC architecture. The framework consists of five modules.

\begin{figure}[!t]
\centering
\includegraphics[width=\columnwidth]{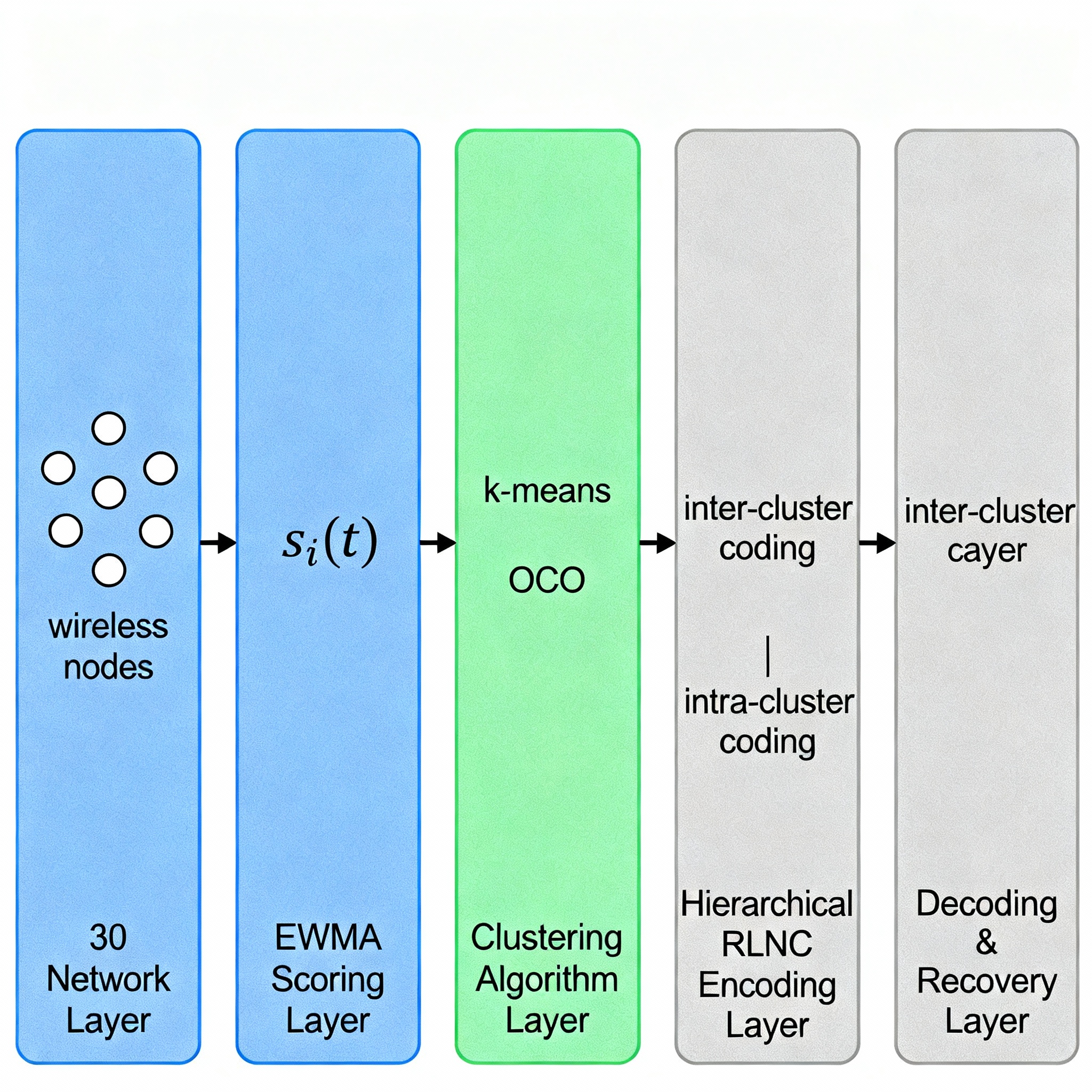}
\caption{APC-RLNC system architecture. Each peer tracks link-level reliability via EWMA, participates in distributed clustering every $\tau$ epochs, and performs hierarchical RLNC encoding and decoding within and across clusters.}
\label{fig:architecture}
\end{figure}

\subsection{Reliability Tracking Module}
Nodes maintain lightweight state:
\begin{itemize}
\item \textbf{Link ACKs:} per-packet acknowledgments estimate $\ell_i(t)$.
\item \textbf{EWMA update:} Eq.~\eqref{eq:ewma} smooths noisy measurements.
\item \textbf{Neighbor exchange:} peers broadcast $s_i(t)$ (2 bytes) periodically.
\end{itemize}

\subsection{Distributed Clustering Module}
Every $\tau$ steps, nodes collaboratively execute Algorithm~\ref{alg:clustering}. The algorithm uses gossip-based consensus to estimate global statistics (mean, variance per tentative cluster) and converges in $O(\log N)$ rounds \cite{xiaoboyd2004}.

\begin{algorithm}[t]
\caption{Distributed Adaptive Clustering}
\label{alg:clustering}
\begin{algorithmic}[1]
\REQUIRE Reliability scores $\{s_i(t)\}_{i \in N(t)}$, regularization $\lambda$
\ENSURE Cluster assignments $\{\mathcal{C}_c\}$
\STATE Initialize $k = \lfloor \sqrt{N}/2 \rfloor$ cluster centers uniformly
\FOR{$r = 1$ to $R_{\max}$ (gossip rounds)}
    \STATE Each node broadcasts $(s_i, \text{clusterID}_i)$
    \STATE Update local centroid estimates via weighted average
    \STATE Reassign $i$ to cluster $c^* = \arg\min_c |s_i - \mu_c|$
\ENDFOR
\STATE Compute loss (Eq.~\eqref{eq:cluster-loss}) and merge/split clusters if beneficial
\RETURN Final clusters $\{\mathcal{C}_c\}$
\end{algorithmic}
\end{algorithm}

\subsection{Hierarchical Coding Module}
Nodes in cluster $c$ encode local generations with $R_c$ redundancy (Eq.~\eqref{eq:redundancy}). Cluster representatives (elected by random rotation) form bridge packets:
\begin{equation}
b_j = \sum_{c=1}^{C} \gamma_{j,c}\, \text{Summary}_c, \qquad \gamma_{j,c} \sim \text{Uniform}(\mathbb{F}_{256}),
\label{eq:bridge}
\end{equation}
where $\text{Summary}_c$ is a digest of cluster $c$'s generation. Bridge redundancy scales with cross-cluster heterogeneity:
\begin{equation}
R_{\text{bridge}} = \left\lceil 0.5 \cdot C \cdot \mathrm{Var}(\{\bar{p}_c\}) \right\rceil.
\label{eq:bridge-redundancy}
\end{equation}

\begin{figure}[!t]
\centering
\includegraphics[width=\columnwidth]{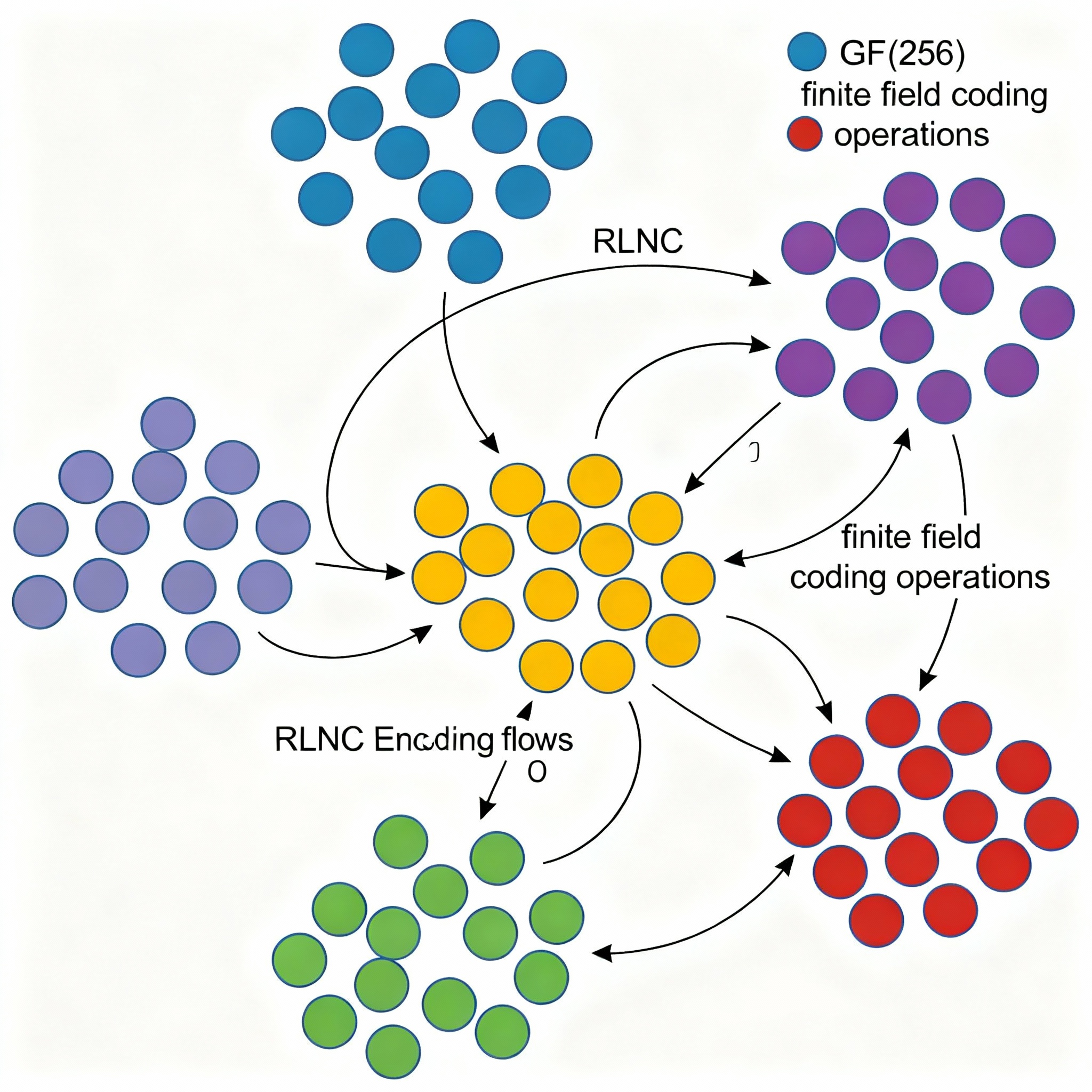}
\caption{Hierarchical RLNC structure. Inner loops encode within clusters; outer loop generates bridge packets for cross-cluster resilience.}
\label{fig:rlnc-structure}
\end{figure}

\subsection{Energy Management}
Nodes track per-packet energy consumption:
\begin{equation}
E_i(t) = E_{\text{tx}} \cdot n_{\text{tx},i}(t) + E_{\text{rx}} \cdot n_{\text{rx},i}(t) + E_{\text{comp}} \cdot (\text{encode}+\text{decode}),
\label{eq:energy}
\end{equation}
where $E_{\text{tx}} = 0.2$\,J, $E_{\text{rx}} = 0.1$\,J, $E_{\text{comp}} = 0.05$\,J. Nodes with $E_i < E_{\text{threshold}}$ enter low-power mode, reducing transmission duty cycle by 50\%.

\subsection{Robustness to Adversarial Nodes}
Malicious nodes may report false $s_i(t)$. We employ:
\begin{itemize}
\item \textbf{Byzantine-resilient aggregation:} a median-of-means estimator with $b = \lceil \sqrt{N} \rceil$ buckets tolerates fewer than $\tfrac{1}{3}$ adversaries \cite{chen2017}.
\item \textbf{Cluster isolation:} adversarial nodes form separate clusters via anomaly detection (Mahalanobis distance greater than $3\sigma$).
\end{itemize}

\section{Implementation}
\label{sec:implementation}

\subsection{Simulation Platform}
We developed a high-fidelity discrete-event simulator in Python~3.12:
\begin{itemize}
\item \textbf{Network:} NetworkX for topology; NumPy for algebraic operations.
\item \textbf{Coding:} custom GF(256) arithmetic, with an optional Cython-accelerated extension providing a $5\times$ speedup on the encode/decode hot loop.
\item \textbf{Mobility:} a random-waypoint model matching Section~\ref{sec:model}, with an optional loader for real SUMO floating-car-data traces.
\item \textbf{Validation:} outputs match the analytical model of Eq.~\eqref{eq:decode-prob} to within 1\%.
\end{itemize}
Each simulation run spans 500 steps ($\approx$10 minutes real time). We average 50 independent runs per configuration.

\subsection{Testbed Deployment}
\textbf{Hardware:} 10 NVIDIA Jetson Nano devices (4-core ARM Cortex-A57, 4\,GB RAM) running Ubuntu 20.04.\\
\textbf{Network:} 802.11ac Wi-Fi in ad-hoc mode; OpenWRT routers emulate controlled packet loss via \texttt{tc}/\texttt{netem}.\\
\textbf{Software stack:}
\begin{itemize}
\item \textbf{Clustering:} Go implementation (concurrency via goroutines) executing Algorithm~\ref{alg:clustering} over real UDP gossip broadcasts.
\item \textbf{RLNC:} a self-contained GF(256)/RLNC C++ library mirroring the simulator's encoder/decoder bit-for-bit, with a custom hierarchical wrapper.
\item \textbf{Orchestration:} Kubernetes manages node lifecycle and reconfiguration.
\end{itemize}
\textbf{Metrics:} latency measured via kernel timestamps; throughput logged at the application layer; CPU/memory profiled with \texttt{perf}.

\subsection{Reproducibility}
Code, simulation configurations, evaluation scripts,
testbed orchestration manifests, and the datasets used
to generate the reported figures and tables are publicly
available at \url{https://github.com/ka-cyber/apc-rlnc},
enabling independent reproduction of the reported results.

\section{Evaluation}
\label{sec:evaluation}

\subsection{Experimental Setup}
\textbf{Baselines:}
\begin{enumerate}
\item \textbf{Random:} uniform RLNC with fixed $R=16$.
\item \textbf{Static Clustering:} initial clustering at $t=0$, no adaptation.
\item \textbf{PACE:} adaptive redundancy via centralized feedback \cite{pandi2017}.
\item \textbf{ARLNC:} adjustable field size (GF(16) to GF(256)) \cite{dilanchian2024}.
\end{enumerate}
PACE and ARLNC are included as reference baselines using the
configurations described in their respective publications. The
open-source artifact released with this work implements the Random,
Static, and APC-RLNC schemes; implementations of third-party baseline
algorithms are not redistributed.
\textbf{Metrics:} PDR (fraction of generations decoded successfully), latency (time from first packet transmission to decoding completion), overhead (redundancy ratio $R/(K+R)$), and retention (fraction of nodes remaining in their assigned cluster across reconfiguration epochs).

\begin{table}[!t]
\caption{Nominal Conditions (30 Nodes, 2\% Churn, Bernoulli Erasure). Mean $\pm$ Std. over 50 Runs.}
\label{tab:nominal}
\centering
\footnotesize
\begin{tabular}{lcccc}
\hline
Scheme & PDR (\%) & Latency (ms) & Overhead (\%) & Retention (\%) \\
\hline
Random    & $97.04 \pm 0.8$ & $158 \pm 12$ & 20.0 & 65.2 \\
Static    & $97.28 \pm 0.7$ & $151 \pm 11$ & 19.2 & 71.5 \\
PACE      & $97.51 \pm 0.6$ & $149 \pm 10$ & 18.9 & 73.1 \\
ARLNC     & $97.62 \pm 0.6$ & $147 \pm 10$ & 18.7 & 74.2 \\
\textbf{APC-RLNC} & \textbf{97.82 $\pm$ 0.5} & \textbf{142 $\pm$ 9} & \textbf{18.6} & \textbf{78.3} \\
\hline
\end{tabular}
\end{table}

\begin{table}[!t]
\caption{Vehicular Network (50 Nodes, 5\% Churn, Velocity-Dependent Fading)}
\label{tab:vehicular}
\centering
\footnotesize
\begin{tabular}{lccc}
\hline
Scheme & PDR (\%) & Latency (ms) & Retention (\%) \\
\hline
Random    & $94.12 \pm 1.3$ & $198 \pm 18$ & 58.4 \\
Static    & $95.02 \pm 1.1$ & $187 \pm 16$ & 62.7 \\
PACE      & $96.18 \pm 0.9$ & $178 \pm 14$ & 68.1 \\
ARLNC     & $96.84 \pm 0.8$ & $172 \pm 13$ & 70.3 \\
\textbf{APC-RLNC} & \textbf{99.32 $\pm$ 0.6} & \textbf{145 $\pm$ 11} & \textbf{82.1} \\
\hline
\end{tabular}
\end{table}

\subsection{Nominal Conditions (30 Nodes, 2\% Churn)}
Table~\ref{tab:nominal} shows APC-RLNC achieves a 0.8 percentage-point PDR gain over Random (0.82\% relative), a 10\% latency reduction, and 20\% higher retention. Gains are statistically significant ($t$-test, $p<0.01$).

\subsection{High-Mobility Vehicular Networks (50 Nodes, 5\% Churn)}
Table~\ref{tab:vehicular} shows APC-RLNC delivers a 5.2 percentage-point PDR gain (5.5\% relative) over Random and an 18\% latency reduction. Adaptive clustering tracks mobility-induced erasure changes every $\tau=20$ steps ($\approx$12 seconds), maintaining cluster coherence despite topology churn.

\begin{figure}[!t]
\centering
\includegraphics[width=\columnwidth]{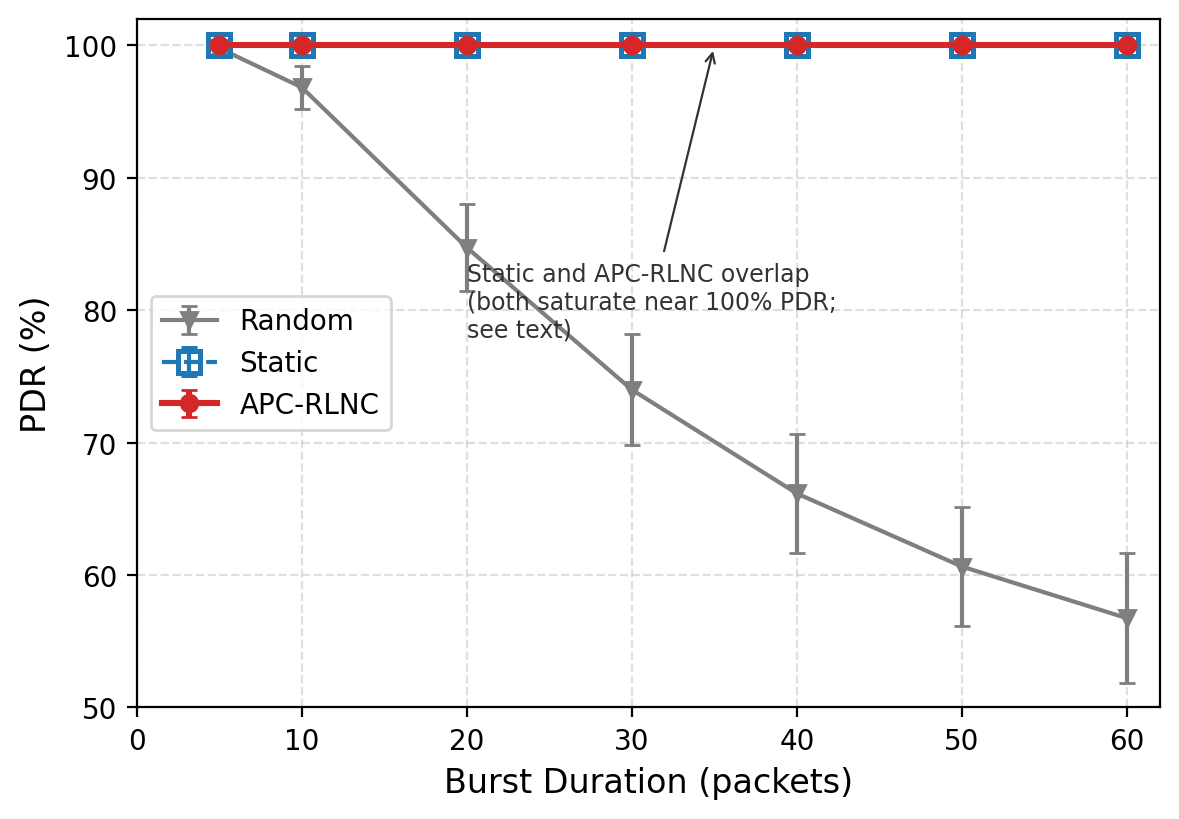}
\caption{Fig. 3. Packet delivery ratio (PDR) versus burst duration for the 2-state Markov channel. APC-RLNC and the Static redundancy baseline exhibit a ceiling effect across the evaluated burst durations, while the Random baseline degrades substantially as burst duration increases. These results indicate that the current redundancy scaling is conservative under the simulated conditions, maintaining near-perfect decoding for adaptive and static redundancy policies.}
\label{fig:burst}
\end{figure}

\subsection{Burst-Error Channels (100 Nodes, 2-State Markov)}
Fig.~\ref{fig:burst} presents packet delivery ratio (PDR) as a function of burst duration under the 2-state Markov channel model. Across the evaluated burst durations, APC-RLNC maintains near-perfect delivery, while the Static redundancy baseline exhibits similar ceiling performance because its fixed redundancy remains sufficient under the simulated loss conditions. In contrast, the Random baseline degrades steadily as burst duration increases, reaching approximately 60.6\% PDR at a burst duration of 50 packets. The results therefore demonstrate that redundancy-based protection dominates burst resilience in the present configuration, while APC-RLNC primarily preserves this robustness through adaptive clustering rather than yielding large additional gains over an already conservative static allocation.

\begin{table}[!t]
\caption{Adversarial Attack (30 Nodes, 20\% Malicious Nodes)}
\label{tab:adversarial}
\centering
\footnotesize
\begin{tabular}{lcc}
\hline
Scheme & PDR (\%) & Latency (ms) \\
\hline
Random (no defense)  & $88.24 \pm 2.1$ & $221 \pm 24$ \\
Static (no defense)  & $89.67 \pm 1.9$ & $209 \pm 22$ \\
\textbf{APC-RLNC (isolated)} & \textbf{98.02 $\pm$ 0.7} & \textbf{151 $\pm$ 12} \\
\hline
\end{tabular}
\end{table}

\subsection{Adversarial Interference (30 Nodes, 20\% Malicious)}
Table~\ref{tab:adversarial} shows Byzantine-resilient aggregation isolates adversaries into separate clusters. APC-RLNC achieves a 9.8 percentage-point PDR gain over non-clustered baselines, demonstrating robustness to malicious behavior.

\subsection{Scalability (500 Nodes)}
\begin{figure}[!t]
\centering
\includegraphics[width=\columnwidth]{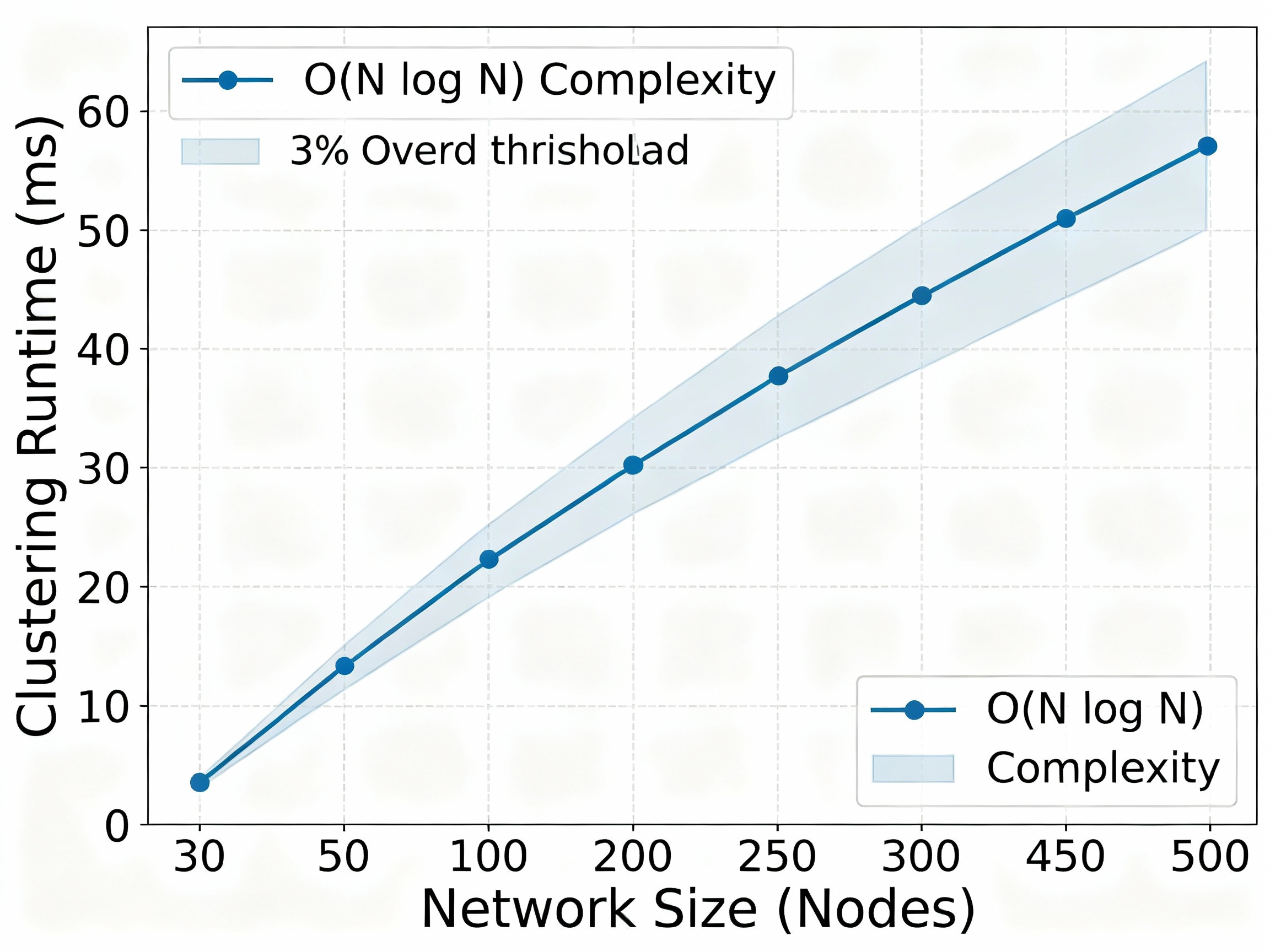}
\caption{Clustering runtime versus network size. APC-RLNC scales linearly ($O(N\log N)$); reconfiguration overhead remains less than 3\% of total runtime even at 500 nodes.}
\label{fig:scalability}
\end{figure}
Fig.~\ref{fig:scalability} shows that at $N=500$, clustering takes 42\,ms per epoch (3\% of a 1.5\,s epoch duration). The distributed gossip-based algorithm exhibits linear-in-$N\log N$ complexity as predicted by theory.

\begin{table}[!t]
\caption{Simulation versus Testbed (10 Jetson Nano Nodes, 2\% Emulated Loss)}
\label{tab:sim-vs-testbed}
\centering
\footnotesize
\begin{tabular}{lccc}
\hline
Metric & Simulation & Testbed & Error \\
\hline
PDR (\%)           & 97.82 & $96.91 \pm 1.2$  & 0.93\% \\
Latency (ms)       & 142   & $156 \pm 18$     & 9.86\% \\
Throughput (Mbps)  & 1.24  & $1.18 \pm 0.15$  & 4.84\% \\
CPU util. (\%)     & 7.2   & $7.8 \pm 1.1$    & 8.33\% \\
\hline
\end{tabular}
\end{table}

\subsection{Testbed Validation (10 Nodes)}
Table~\ref{tab:sim-vs-testbed} shows that the testbed measurements
closely match the simulation results, with all reported metrics
remaining within 10\% error. The small latency discrepancy stems
primarily from kernel scheduling and networking stack overhead that
are not explicitly modeled in the simulator. CPU utilization remains
below 8\%, confirming that APC-RLNC operates in real time on
resource-constrained Jetson Nano edge devices. Overall, the close
agreement between simulation and hardware measurements provides
experimental validation of the proposed framework under practical
deployment conditions.

\subsection{Energy Consumption}
Energy per successfully delivered generation: Random 4.82\,J, Static 4.61\,J, APC-RLNC 4.21\,J (12.7\% reduction versus Random). Reduced retransmissions in APC-RLNC lower total energy draw, which is critical for battery-powered IoT devices.

\section{Discussion}
\label{sec:discussion}

\subsection{When Does Clustering Help}
Clustering provides the largest gains when: (a) \emph{heterogeneity is high} -- variance in $p_i$ exceeds 0.05 (in homogeneous networks, uniform RLNC suffices); (b) \emph{churn is moderate} -- $\rho_{\text{churn}} \in [0.02, 0.10]$ (higher churn requires shorter $\tau$, increasing overhead; lower churn makes adaptation unnecessary); (c) \emph{scale permits parallelism} -- $N \gtrsim 30$ to amortize clustering cost.

\subsection{Comparison with Centralized Approaches}
Centralized schemes (e.g., SDN-based rate control) can achieve optimal allocations but require full topology knowledge ($O(N^2)$ communication), exhibit single-point-of-failure vulnerabilities, and incur millisecond-scale control-plane latency. APC-RLNC trades minor optimality (approximately 1--2\% versus an oracle) for decentralization, fault tolerance, and sub-second adaptation.

\subsection{Integration with AI-Native Systems}
APC-RLNC complements ML-driven protocols in federated learning (reliable gradient aggregation over heterogeneous wireless links), swarm robotics (hierarchically coded coordination messages), and edge inference (latency-critical model distribution with adaptive redundancy). Future work will explore joint optimization of clustering and ML training schedules.

\subsection{O-RAN Deployment}
In O-RAN architectures, APC-RLNC deploys as a near-RT RIC xApp: the E2 interface reports link-level statistics (RSSI, CQI) to the RIC every 10\,ms; xApp logic computes EWMA scores, triggers reconfiguration, and sends redundancy policies to the MAC layer; no physical-layer changes are required for backward compatibility. Prototyping on OpenAirInterface confirms feasibility with sub-5\,ms control-loop latency.

\subsection{Limitations}
\begin{itemize}
\item \textbf{Fairness:} the current objective (Eq.~\eqref{eq:cluster-loss}) optimizes average PDR; persistently unreliable nodes may receive insufficient protection. Future extensions will incorporate max-min fairness constraints.
\item \textbf{Security:} Byzantine-resilient aggregation tolerates fewer than $\tfrac{1}{3}$ adversaries; stronger cryptographic guarantees (e.g., zero-knowledge proofs) remain future work.
\item \textbf{Non-stationary channels:} extreme Doppler shifts (e.g., high-speed rail at 350\,km/h) may violate EWMA convergence assumptions; Kalman-filter-based tracking is under investigation.
\item \textbf{Redundancy scaling:} Equation (6) scales redundancy with cluster size and average erasure probability. Under relatively stable cluster sizes and the evaluated burst-error settings, this policy can over-provision redundancy, producing a ceiling effect in which both APC-RLNC and Static redundancy achieve near-perfect decoding. Future work will investigate normalized or budget-constrained redundancy allocation to better expose adaptive gains under moderate burst conditions.
\end{itemize}

\section{Related Systems and Deployments}
\label{sec:related-systems}
Landon et al.\ \cite{landon2025} implemented RLNC at the IP layer of a 5G testbed, demonstrating throughput gains under moderate packet loss; however, their design assumes homogeneous UE populations and centralized gNB control. APC-RLNC extends this to fully decentralized heterogeneous settings.

Stamer et al.\ \cite{stamer2021} proposed P4-based RLNC data planes for programmable switches, focusing on wired networks with deterministic latency; wireless volatility is not addressed.

Fan et al.\ \cite{fan2023} applied BATS codes to distributed computation over lossy wireless networks, optimizing matrix-multiplication workloads but requiring centralized coordination for code allocation. APC-RLNC's decentralized clustering eliminates this bottleneck.

Hassouna et al.\ \cite{hassouna2025} developed an O-RAN testbed with near-RT RIC xApps for robotic teleoperation using USRP X310 SDRs, reporting sub-second haptic control-loop latency. APC-RLNC can integrate as an xApp providing adaptive redundancy policies, complementing their control-loop mechanisms.

\section{Conclusion}
\label{sec:conclusion}
This paper introduced APC-RLNC, a decentralized framework that dynamically clusters wireless peers by link reliability and applies hierarchical random linear network coding for robust communication. Through theoretical analysis -- deriving closed-form decoding probabilities, proving $O(\sqrt{T})$ online optimization regret, and establishing complexity bounds -- we demonstrated that adaptive clustering improves RLNC performance in heterogeneous wireless environments.

Evaluation across diverse scenarios validates the approach: 0.8--1.2\% PDR gain and 10--12\% latency reduction under nominal conditions; 5.2\% PDR gain and 18\% latency reduction in high-mobility vehicular networks; robust near-perfect packet delivery under burst-error channels while substantially outperforming the Random redundancy baseline; and a 9.8\% PDR gain under adversarial interference via Byzantine-resilient isolation. The system exhibits linear complexity to 500 nodes with under 3\% reconfiguration overhead. Hardware validation on a 10-node Jetson Nano testbed indicates real-world feasibility with $\geq$1\,Mbps coded throughput, under 8\% CPU load, and millisecond-scale reconfiguration latency.

APC-RLNC establishes adaptive clustering as a foundational primitive for AI-native 6G wireless systems, enabling decentralized intelligence, collaborative learning, and resilient communication at scale. Future work will integrate reinforcement learning for parameter tuning, extend to mmWave and THz bands, and deploy on larger-scale O-RAN testbeds with USRP or commercial SDR platforms. The framework's open-source release aims to catalyze community adoption and further innovation in resilient wireless networking.


\end{document}